\newtheorem{theorem}{Theorem}[section]
\newtheorem{proposition}[theorem]{Proposition}
\newtheorem{corollary}[theorem]{Corollary}
\newenvironment{proof}[1][Proof]{\begin{trivlist}
\item[\hskip \labelsep {\bfseries #1}]}{\end{trivlist}}
\title{Adaptive Homing is in P}
\author{Natalia Kushik
\institute{Tomsk State University\\ Tomsk, Russia}
\email{ngkushik@gmail.com}
\and
Nina Yevtushenko
\institute{Tomsk State University\\ Tomsk, Russia}
\email{yevtushenko@sibmail.com}
}
\begin{document}
\maketitle

\begin{abstract}
Homing preset and adaptive experiments with Finite State Machines (FSMs) are widely used when a non-initialized discrete event system is given for testing and thus, has to be set to the known state at the first step. The length of a shortest homing sequence is known to be exponential with respect to the number of states for a complete observable nondeterministic FSM while the problem of checking the existence of such sequence (Homing problem) is PSPACE-complete. In order to decrease the complexity of related problems, one can consider adaptive experiments when a next input to be applied to a system under experiment depends on the output responses to the previous inputs. In this paper, we study the problem of the existence of an adaptive homing experiment for complete observable nondeterministic machines. We show that if such experiment exists then it can be constructed with the use of a polynomial-time algorithm with respect to the number of FSM states.
\end{abstract}

\section{Introduction}
Finite State Machines (FSMs) are widely used when deriving high quality tests for reactive discrete event systems. If a system is non-initialized then homing and synchronizing experiments with FSMs are used in order to set the system into the known state \cite{Sandberg}. Homing experiments can be preset and adaptive when the next input significantly depends on output responses produced to the previously applying inputs. As the underlying model for synchronizing experiments is a finite automaton without outputs, only preset synchronizing experiments are considered in various papers (see, for example \cite{Volkov}, \cite{Cerny}, \cite{Klyachko}). 

Homing experiments are well studied for deterministic FSMs where minimal length homing sequences are derived based on a truncated successor tree \cite{Gill}, \cite{Kohavi}. Any deterministic complete reduced FSM with $n$ states has a homing sequence of length up to $n(n - 1)/2$. A related detailed survey on deriving synchronizing sequences is given by Sandberg in \cite{Sandberg}. As usual, when performing 'gedanken' experiments with state models, a transition behaviour of the machine under experiment is supposed to be known \cite{Moore}.

Nowadays, nondeterministic FSMs are thoroughly studied for deriving tests with the guaranteed fault coverage. The reason is that the specification FSM can be nondeterministic according to a number of reasons. For example, it can be a corollary to the optionality, as it happens when extracting an FSM from corresponding RFC specifications \cite{IRC}, or non-determinism can occur according to the limited controllability and/or observability when testing a component of a modular system \cite{TestCom2006}. Correspondingly, there are a number of publications about test derivation against nondeterministic FSMs. An FSM is nondeterministic if at some state the machine has several transitions under a given input. If an FSM under test is non-initialized a homing preset or adaptive sequence has to be applied before a test sequence in order to set the FSM into the known state.

For preset homing experiments for nondeterministic FSMs, Kushik et al. \cite{CIAA2011} show that differently from deterministic FSMs a homing sequence does not necessarily exist for a complete reduced nondeterministic FSM and proposed an algorithm for deriving a preset homing sequence for a given observable nondeterministic FSM when such sequence exists. A tight upper bound on a shortest preset homing sequence is exponential with respect to to the number of FSM states while the problem of checking the existence of a preset homing sequence for nondeterministic FSMs (Homing problem) is PSPACE-complete \cite{Programmirovanie2014}.

In order to decrease the complexity of related problems, adaptive homing experiments can be used when a next input that is applied to a machine under experiment is selected based on the output responses to previously applied inputs. However, Hibbard \cite{Hibbard} showed that, in general, for deterministic machines adaptive homing experiments do not shorten the length of an applied input sequence. In other words, deterministic complete machines require adaptive homing experiments with the height of the same order as for the preset case. However, it is not the case for nondeterministic FSMs. For nondeterministic FSMs, the upper bound on the height of adaptive homing experiments and the complexity of checking the existence of a homing adaptive experiment can be reduced with respect to the preset case. In this paper, we study the problem of checking the existence of an adaptive homing experiment for a nondeterministic FSM and refer to this problem as an ``Adaptive Homing'' problem.

Similar to Agrawal et al. \cite{Agrawal} who first showed the existence of an unconditional polynomial-time algorithm for checking if an integer is prime or composite, we present an approach for checking and deriving (if possible) an adaptive homing test case that represents an adaptive homing experiment for a given nondeterministic FSM. The unconditional polynomial-time algorithm is based on efficient checking of the existence of a homing test case for each pair of the FSM states. The complexity of this algorithm significantly depends on the number of FSM inputs and outputs as well as on the number of FSM states. Assuming that the number of FSM inputs as well as the number of its outputs is polynomial with respect to the number of states, we prove the polynomial complexity of an Adaptive Homing problem.

The structure of the paper is as follows. Section 2 contains Preliminaries. In Section 3, an approach for checking the existence of a homing test case for a given pair of FSM states is presented and the complexity of the Adaptive Homing problem is evaluated. Section 4 concludes the paper.

\section {Preliminaries}
A \textit{(non-initialized) Finite State Machine (FSM)} \textbf{S} is a 5-tuple $(S, I, O, h_S)$, where $S$ is a finite set of states; $I$ and $O$ are finite non-empty disjoint sets of inputs and outputs; $h_S \subseteq S \times I \times O \times S$ is a $transition$ $relation$, where a 4-tuple $(s, i, o, s') \in h_S$ is a $transition$. 

An FSM \textbf{S} = $(S, I, O, h_S)$ is \textit{complete} if for each pair $(s, i) \in S \times I$ there exists a pair $(o, s') \in O \times S$ such that $(s, i, o, s') \in h_S$; otherwise, the machine is \textit{partial}. Given a partial FSM \textbf{S}, an input $i$ is a \textit{defined} input at state $s$ if there exists a pair $(o, s') \in O \times S$ such that $(s, i, o, s') \in h_S$. FSM \textbf{S} is \textit{nondeterministic} if for some pair $(s, i) \in S \times I$, there exist at least two transitions $(s, i, o_1, s_1)$, $(s, i, o_2, s_2) \in h_S$, such that $o_1 \ne o_2$ or $s_1 \ne s_2$. FSM \textbf{S} is \textit{observable} if for each two transitions $(s, i, o, s_1)$, $(s, i, o, s_2) \in h_S$ it holds that $s_1 = s_2$. FSM \textbf{S} is \textit{single-input} if at each state there is at most one defined input at the state, i.e., for each two transitions $(s, i_1, o_1, s_1)$, $(s, i_2, o_2, s_2) \in h_S$ it holds that $i_1 = i_2$, and FSM \textbf{S} is \textit{output-complete} if for each pair $(s, i) \in S \times I$ such that the input $i$ is defined at state $s$, there exists a transition from $s$ with $i$ for every output in $O$ \cite{Petrenko2011}. The FSM with the designated initial state $s_0$ is an \textit{initialized} FSM, written $(S, s_0, I, O, h_S)$. Given initialized FSMs \textbf{S} = $(S, s_0, I, O, h_S)$ and \textbf{P} = $(P, p_0, I, O, h_P)$, the FSM \textbf{P} is a submachine of \textbf{S} if $P \subseteq S$, $p_0 = s_0$ and $h_P \subseteq h_S$. A \textit{trace} of \textbf{S} at state $s$ is a sequence of input/output pairs of sequential transitions starting from state $s$. Given a trace $(i_1, o_1) \dots (i_k, o_k)$ at state $s$, the input projection $i_1 \dots i_k$ of the trace is a defined input sequence at state $s$. An initialized FSM \textbf{S} = $(S, s_0, I, O, h_S)$ is \textit{acyclic} if the set $Tr$(\textbf{S}/$s_0$) of traces at the intial state is finite, i.e., the FSM transition diagram has no cycles. As usual, for state $s$ and a trace $\gamma$, the $\gamma$-successor of state $s$ is the set of all states that are reached from $s$ by $\gamma$. If $\gamma$ is not a trace at state $s$ then the $\gamma$-successor of state $s$ is empty or we simply say that the $\gamma$-successor of state $s$ does not exist. For an observable FSM \textbf{S}, the cardinality of the $\gamma$-successor of state $s$ is at most one for any trace $\gamma$. Given a nonempty subset $S'$ of states of the FSM \textbf{S} and a trace $\gamma$, the $\gamma$-successor of the set $S'$ is the union of $\gamma$-successors over all $s \in S'$.

As in this paper we consider homing experiments with nondeterministic FSMs, in order to identify a state of a given non-initialized FSM after the experiment, a finite input sequence is applied to the FSM where the next input (except of the first one) of the sequence is determined based on the output of the FSM produced to the previous input. Formally, such an experiment can be described using a single-input output-complete FSM with an acyclic transition graph and similar to \cite{Petrenko2011} we refer to such an FSM as a \textit{test case}. A test case \textbf{P} is \textit{homing} for the set $S'$ of states of the FSM \textbf{S} if for each trace $\gamma$ from the initial state to a deadlock state of \textbf{P}, there exists a state $s$ of \textbf{S} such that for each state $s' \in S'$, the $\gamma$-successor of state $s'$ does not exist or the $\gamma$-successor of state $s'$ is $s$. If there exists a homing test case for the set $S'$ then the set $S'$ is \textit{adaptively homing} or simply a \textit{homing set}. If there exists a homing test case for the set $S$ of all states then FSM \textbf{S} is \textit{adaptively homing}.

Given an input alphabet $I$ and an output alphabet $O$, a \textit{test case} $TC(I, O)$ is an initially connected single-input output-complete observable initialized FSM \textbf{P} = $(P, I, O, h_P, p_0)$ with the acyclic transition graph. By definition, if $|I| > 1$ then a test case is a partial FSM. A test case $TC(I, O)$ over alphabets $I$ and $O$ defines an adaptive experiment with any complete FSM \textbf{S} over the same alphabets. A test case $TC(I, O)$ over alphabets $I$ and $O$ is a \textit{homing test case} for FSM \textbf{S} = $(S, I, O, h_S)$, if for each trace $\gamma$ of the test case from the initial state to a deadlock state the $\gamma$-successor of the set $S$ is a singleton.

\section {Adaptive Homing problem for a pair of FSM states}
An FSM is \textit{adaptively homing} \cite{STTT2014} if there exists an adaptive homing experiment. In general, given a test case \textbf{P}, the \textit{length} of the test case \textbf{P} is determined as the length of a longest trace from the initial state to a deadlock state of \textbf{P} and it specifies the length of the longest input sequence that can be applied to an FSM \textbf{S} during the experiment that is also often called the \textit{height} of the adaptive experiment. In this section, we discuss how an Adaptive Homing problem can be solved for a given pair of FSM states. This problem is stated as follows. Given a complete nondeterministic FSM \textbf{S} = $(S, I, O, h_S)$, and a pair $(s_i, s_j)$, the question is whether there exists a homing test case for a pair $(s_i, s_j)$. As the reply, there can be a homing test case for a pair $(s_i, s_j)$ or a message ``the pair $(s_i, s_j)$ is not adaptively homing''.

In this section, we focus on solving Adaptive Homing problem for a pair of FSM states, since if the FSM under experiment is observable then Adaptive Homing problem for FSM \textbf{S} can be reduced to this problem for each state pair. Given a state pair $(s_i, s_j)$, we propose a procedure for checking whether this pair of states is adaptively homing based on the corresponding FSM intersection. The complexity evaluation seems to be more straightforward when using the corresponding intersection than for the procedure proposed in \cite{STTT2014}. Without loss of generality, consider a pair $(s_1, s_2)$ of FSM states.

Given a complete observable FSM \textbf{S} = $(S, I, O, h_S)$ and two different states $s_1$ and $s_2$ of \textbf{S}, we derive the intersection \textbf{S}/$s_1$ $\cap$ \textbf{S}/$s_2$ = $(Q, (s_1,s_2), I, O, h_{\textbf{S}/s_1 \cap \textbf{S}/s_2})$ in a usual way. States of \textbf{S}/$s_1$ $\cap$ \textbf{S}/$s_2$ are pairs $(s_j, s_k)$, $j < k$, $j, k = 1, \dots, n$, and there is a transition $((s_j, s_k), i, o, (s'_j, s'_k))$ if and only if $s'_j \ne s'_k$ and $s'_j, s'_k$ are $io$-successors of states $s_j$ and $s_k$. If for all $o$ there are no such $io$-successors then a transition at the state $(s_j, s_k)$ under input $i$ is not defined.

\begin{proposition}
Given two states $s_1$ and $s_2$ of a complete observable FSM \textbf{S} = $(S, I, O, h_S)$ and the intersection \textbf{S}/$s_1$ $\cap$ \textbf{S}/$s_2$, states $s_1$ and $s_2$ are not adaptively homing if and only if the intersection \textbf{S}/$s_1$ $\cap$ \textbf{S}/$s_2$ has a complete submachine. 
\end{proposition}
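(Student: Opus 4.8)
The plan is to prove both directions by relating adaptive homing test cases for the pair $(s_1,s_2)$ to the structure of the intersection $\mathbf{S}/s_1 \cap \mathbf{S}/s_2$. The key observation is that a state $(s_j,s_k)$ of the intersection encodes precisely the ``bad'' situation in which the current knowledge after the experiment is still ambiguous between two distinct states $s_j$ and $s_k$; a transition $((s_j,s_k),i,o,(s'_j,s'_k))$ exists exactly when applying $i$ and observing $o$ leaves us still ambiguous (between $s'_j$ and $s'_k$), whereas the absence of an outgoing transition under input $i$ for some output $o$ means that this $(i,o)$ resolves the ambiguity (the $io$-successors of $s_j$ and $s_k$ coincide or one fails to exist). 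A complete submachine of the intersection is thus a nonempty set of ambiguous state pairs that is closed under ``adversarial'' transitions: for every pair in it and every input $i$, there is some output $o$ keeping us inside the set.

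First I would prove the ``if'' direction (existence of a complete submachine implies not adaptively homing) by an adversary argument. Suppose $\mathbf{P}$ is an arbitrary test case and $\mathbf{C}$ is a complete submachine of the intersection whose initial state is $(s_1,s_2)$ (note the intersection itself has initial state $(s_1,s_2)$, so $\mathbf{C}$ contains it). I would show that the adversary can force $\mathbf{P}$ to a deadlock state along a trace $\gamma$ whose successor from $\{s_1,s_2\}$ is not a singleton: maintain the invariant that after the current trace $\gamma'$, the $\gamma'$-successor of $\{s_1,s_2\}$ contains two distinct states $s_j,s_k$ with $(s_j,s_k)$ a state of $\mathbf{C}$. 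This holds initially. At each step $\mathbf{P}$ dictates the input $i$ (it is single-input), and by completeness of $\mathbf{C}$ there is an output $o$ with $((s_j,s_k),i,o,(s'_j,s'_k)) \in h_{\mathbf C}$; since $\mathbf{P}$ is output-complete this $o$ is available in $\mathbf{P}$ as well, so the adversary picks it, preserving the invariant. Since $\mathbf{P}$ is acyclic this process terminates at a deadlock of $\mathbf{P}$, and the invariant then contradicts $\mathbf{P}$ being homing for $(s_1,s_2)$.

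For the converse (``only if'': not adaptively homing implies a complete submachine exists), I would argue contrapositively: if the intersection has no complete submachine, I construct a homing test case. The natural construction is the truncated successor tree / a greedy adaptive strategy played on the intersection: because there is no complete submachine, from every state $(s_j,s_k)$ of the intersection there is some input $i$ such that for at least one output $o$ the $io$-successor pair is already resolved — more precisely, the absence of a complete submachine should let me define, by a fixpoint/ranking argument, a well-founded ranking of the reachable states of the intersection so that from each state some input strictly decreases the rank along every output (either driving toward a resolved pair or toward a state of lower rank). Using this ranking to label the nodes of an acyclic test case, at state $(s_j,s_k)$ apply the rank-decreasing input $i$; for outputs $o$ that resolve the ambiguity the branch becomes a deadlock (successor is a singleton), and for outputs $o$ leading to another ambiguous pair the rank has dropped, so induction on the rank gives a finite acyclic homing test case for $\{s_1,s_2\}$.

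The main obstacle I expect is the ranking/fixpoint argument in the converse direction: I need to show cleanly that ``no complete submachine'' is equivalent to ``every state of the intersection can be driven to resolution in finitely many adaptive steps,'' which amounts to identifying the states from which the adversary cannot survive forever with the complement of the largest complete submachine (an attractor-type computation on a finite game graph). Care is also needed because the test case must be single-input and observable, so the strategy must commit to one input per ambiguous pair, and because the intersection is defined only on pairs $(s_j,s_k)$ with $j<k$ — I should check that reachability and the successor-set bookkeeping interact correctly with this symmetry convention (using observability of $\mathbf{S}$ so that $io$-successors are unique). These points are routine game-theoretic arguments once set up, but setting them up precisely is the delicate part.
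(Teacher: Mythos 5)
Your proposal is correct, and its first direction coincides with the paper's: the paper also argues that a complete submachine lets the adversary answer every input with an output that keeps the current pair ambiguous (your version is in fact more careful, since you make the adversary respond online to the single-input, output-complete, acyclic test case rather than quantifying over preset input sequences as the paper loosely does). The second direction is where you genuinely diverge. The paper proves ``not adaptively homing $\Rightarrow$ complete submachine'' directly: it takes $Q'$ to be the set of states of the intersection that are not adaptively homing, observes that for each $q\in Q'$ and each input $i$ at least one $io$-successor must again lie in $Q'$ (else $q$ would be homed by applying $i$ and recursing on its homing successors), and notes that $(s_1,s_2)\in Q'$, so $Q'$ carries a complete submachine. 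You instead prove the contrapositive constructively: with no complete submachine containing the initial state, an attractor/ranking computation on the game graph yields a well-founded measure and hence a finite single-input acyclic homing test case. The paper's route is shorter because it never has to build a test case or set up the ranking; your route costs the extra bookkeeping you flag (committing to one input per pair, the $j<k$ symmetry, keeping the strategy out of adversary-closed traps not containing the initial state), but it buys an explicit construction of the homing test case, which is exactly what the paper's later complexity and size bounds (Propositions 3.2 and 3.3) rely on. Both arguments are sound; yours is the more informative, the paper's the more economical.
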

\begin{proof}
$\Longleftarrow$ 
Let there exist a complete submachine \textbf{F} of the intersection \textbf{S}/$s_1$ $\cap$ \textbf{S}/$s_2$. By definition, this means that for every input sequence $\alpha$ there exists an output sequence $\beta$ such that the trace $\alpha/\beta$ takes the pair $(s_1, s_2)$ to another pair $(s_j, s_k)$, $s_j \ne s_k$, i.e., the pair $(s_1, s_2)$ is not adaptively homing.

$\Longrightarrow$ 
Consider states $s_1$ and $s_2$ which are not adaptively homing. Derive a subset $Q'$ of states of FSM \textbf{Q} = \textbf{S}/$s_1$ $\cap$ \textbf{S}/$s_2$ which are not adaptively homing. For each state $q \in Q'$ and each input $i \in I$, there exists a transition $(q, i, o, q')$ in \textbf{Q} for some $o \in O$ and $q' \in Q$ and moreover, at least one of such states $q'$ is in the set $Q'$; otherwise, state $q$ is adaptively homing. Since the initial state of is not adaptively homing, $(s_1, s_2) \in Q'$; and thus, the FSM \textbf{Q} has a complete submachine with the set $Q'$ of states. 
\end{proof}

\begin{corollary}
States $s_1$ and $s_2$ are adaptively homing if and only if the intersection \textbf{S}/$s_1$ $\cap$ \textbf{S}/$s_2$ has no complete submachine, i.e., each submachine has an input undefined in some state. 
\end{corollary}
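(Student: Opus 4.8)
The plan is to derive the statement directly from the Proposition by negating both sides of its biconditional. Logically, $A \Longleftrightarrow B$ is equivalent to $\neg A \Longleftrightarrow \neg B$; since the Proposition asserts that $s_1$ and $s_2$ are \emph{not} adaptively homing if and only if the intersection \textbf{S}/$s_1$ $\cap$ \textbf{S}/$s_2$ has a complete submachine, it follows at once that $s_1$ and $s_2$ \emph{are} adaptively homing if and only if \textbf{S}/$s_1$ $\cap$ \textbf{S}/$s_2$ has no complete submachine. All the substantive content — both directions of the equivalence — has already been established in the proof of the Proposition, so nothing new needs to be argued here.

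The second step is merely to restate ``has no complete submachine'' in the combinatorial form given in the statement. Recall that a submachine of \textbf{S}/$s_1$ $\cap$ \textbf{S}/$s_2$ is determined by a subset $Q'$ of its states containing the initial state $(s_1,s_2)$ together with a subset of the transitions among the states of $Q'$; such a submachine is complete precisely when every state of $Q'$ has, for every input $i \in I$, at least one outgoing transition that remains in $Q'$. Hence a submachine fails to be complete exactly when it has an input undefined in some state, and ``\textbf{S}/$s_1$ $\cap$ \textbf{S}/$s_2$ has no complete submachine'' is equivalent to ``each submachine of \textbf{S}/$s_1$ $\cap$ \textbf{S}/$s_2$ has an input undefined in some state'', which is the phrasing quoted in the Corollary.

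I do not anticipate any real obstacle: the argument is a one-line contrapositive of the Proposition followed by an unwinding of the definition of a complete submachine. The only point deserving a word of care is that, by the definition of submachine in the Preliminaries, every submachine of the intersection automatically has $(s_1,s_2)$ as its initial state, so there is no degenerate (e.g. empty) case to rule out separately.
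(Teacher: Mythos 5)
Your proposal is correct and matches the paper's treatment: the Corollary is stated without proof as the immediate negation of both sides of the preceding Proposition, exactly the one-line contrapositive you give, with the final clause just unwinding the definition of a complete submachine.
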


\begin{proposition}
The Adaptive Homing problem for given two states $s_1$ and $s_2$ of a complete observable FSM \textbf{S} = $(S, I, O, h_S)$ is in P, when the number of FSM inputs/outputs is polynomial w.r.t. the number of FSM states. 
\end{proposition}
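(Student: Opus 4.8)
The plan is to reduce the Adaptive Homing problem for a state pair to the problem of detecting a complete submachine in the intersection $\textbf{S}/s_1 \cap \textbf{S}/s_2$, which is already established by the preceding Corollary, and then to argue that this detection can be done in polynomial time. First I would bound the size of the intersection: it has at most $n(n-1)/2$ states, at most $|I|$ transitions per state per output, and thus its description is polynomial in $n$, $|I|$, $|O|$; under the stated assumption that $|I|$ and $|O|$ are polynomial in $n$, the intersection itself is of polynomial size and can be constructed in polynomial time directly from the transition relation $h_S$.

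Next I would describe the procedure for checking whether a complete submachine exists. The key observation, which mirrors the $\Longrightarrow$ direction of Proposition 3.1, is that the set of states from which \emph{no} complete submachine can be ``escaped'' — equivalently, the set of adaptively homing states of the intersection — can be computed by a greedy fixed-point elimination. Start with all states of the intersection and repeatedly remove any state $q$ for which there exists an input $i \in I$ such that \emph{every} $io$-successor of $q$ (over all $o \in O$) is either undefined or already removed; a state with such an input is adaptively homing because that input either resolves the pair or leads only to already-homing pairs. Iterate until no more states can be removed. The pair $(s_1, s_2)$ is adaptively homing if and only if its corresponding initial state is removed by this process; the remaining states (if any) form the largest complete submachine. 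Moreover, recording for each removed state the input $i$ witnessing its removal yields, by backward composition, the homing test case itself, so the procedure is constructive.

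The complexity analysis is then routine: each elimination round scans all states and, for each, all inputs and all outputs, costing $O(n^2 \cdot |I| \cdot |O|)$ per round, and there are at most $O(n^2)$ rounds since at least one state is removed each round (or the process halts), giving a total of $O(n^4 \cdot |I| \cdot |O|)$, which is polynomial in $n$ whenever $|I|$ and $|O|$ are. Constructing the intersection and extracting the test case add only lower-order polynomial overhead.

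The main obstacle I anticipate is not the algorithmic idea, which is the standard attractor/fixed-point computation, but making the correctness argument airtight — specifically, verifying that the fixed-point set of non-removed states is precisely the union of all complete submachines (so that non-emptiness of this set containing $(s_1,s_2)$ is equivalent to the characterization in the Corollary), and checking that the test case reconstructed from the witnessing inputs is genuinely a valid test case in the sense defined in the Preliminaries (single-input, output-complete, observable, acyclic). The acyclicity in particular needs a short argument: because a state is removed only after all the successors relevant to its witnessing input have been removed, the removal order induces a ranking that bounds the depth of the reconstructed test case, so no cycles arise.
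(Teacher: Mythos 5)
Your proposal is correct and follows essentially the same route as the paper: both reduce the question to detecting a complete submachine of the intersection $\textbf{S}/s_1 \cap \textbf{S}/s_2$ (via the preceding Proposition/Corollary) and decide it by the same iterative fixed-point elimination of states having an input whose transitions are all undefined or lead to already-removed states, with the pair being homing iff the initial state is eventually removed. Your write-up merely adds explicit detail the paper leaves implicit (the round-by-round cost bound and the reconstruction of the test case from witnessing inputs), so no substantive difference remains.
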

\begin{proof}
In order to estimate the complexity of the Adaptive Homing problem for two states $s_1$ and $s_2$, we evaluate the corresponding complexity as a function of the number $n = |S|$ of FSM states.
	The existence of a complete submachine can be checked by iterative removal from the intersection \textbf{S}/$s_1$ $\cap$ \textbf{S}/$s_2$ each state that has an undefined input along with its incoming transitions. If at the end, the initial state is also removed then the two given states are adaptively homing, otherwise they are not adaptively homing. The procedure is polynomial with respect to the number of states \cite{the_book} when the number of inputs and outputs are polynomial with respect to the number of states, and thus, the complexity of checking the existence of a complete submachine of \textbf{S}/$s_1$ $\cap$ \textbf{S}/$s_2$ is polynomial, since this machine has at most $n(n - 1)/2$ states.
\end{proof}

We mention, that the procedure of checking whether the pair of states is homing, given in Proposition 2, is similar to the one for checking the existence of an adaptive $(s_1, s_2)$-distinguishing strategy considered in \cite{Petrenko2011}. Once the existence of a homing test case is proven one can derive such test case in various ways. A reader may turn to \cite{Petrenko2011} where an algorithm for deriving an adaptive distinguishing test case is proposed or to \cite{STTT2014} where a general procedure for deriving a homing test case for a weakly initialized FSM is proposed. In both cases, the number of states of a shortest test case for a pair of FSM states is at most $n(n - 1)/2 + 1$, where the integer 1 is added for the designated deadlock state. As a test case is a single-input output complete FSM, the maximal number of the test case transitions equals the product of $(n (n - 1)/2)$ and $|O|$. In other words, the following proposition holds.

\begin{proposition}
Given a complete observable nondeterministic FSM \textbf{S} and states $s_1$ and $s_2$ of FSM \textbf{S}, if states $s_1$ and $s_2$ are adaptively homing then the number of transitions of a shortest homing test case does not exceed $(n (n - 1)/2) \cdot |O|$.
\end{proposition}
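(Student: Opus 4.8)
The plan is to separate the statement into two independent pieces: a bound on how many non-deadlock states a sufficiently economical homing test case for $(s_1,s_2)$ needs, and the observation that for a test case the number of transitions is rigidly determined by the number of non-deadlock states. For the second piece, note that by definition a test case is single-input, so at each non-deadlock state exactly one input $i$ is defined; and it is output-complete and observable, so under that $i$ there is exactly one transition for every output $o \in O$. Hence every non-deadlock state carries exactly $|O|$ outgoing transitions while the deadlock state carries none, so the total number of transitions equals $|O|$ times the number of non-deadlock states. It therefore suffices to exhibit a homing test case for $(s_1,s_2)$ with at most $n(n-1)/2$ non-deadlock states; this is the object produced by the constructions of \cite{Petrenko2011} and \cite{STTT2014}, and the remark preceding the statement already records the bound $n(n-1)/2+1$ on its total number of states.

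To keep the argument self-contained I would build such a test case directly on the intersection \textbf{S}/$s_1$ $\cap$ \textbf{S}/$s_2$, using the iterative-removal procedure of the previous proposition. Since $(s_1,s_2)$ is adaptively homing, by the corollary above this intersection has no complete submachine, so the procedure eventually deletes the state $(s_1,s_2)$; call the round in which a pair is deleted its \emph{level}. The key local fact is that a pair $q$ of level at least $1$ has, by the way the procedure works, an input $i$ such that for every output $o \in O$ the $io$-successor of $q$ either does not exist in the intersection --- so that observing $o$ after applying $i$ collapses the candidate pair to a singleton --- or is a pair of strictly smaller level. Building the test case by choosing one such input at each pair (and routing the impossible outputs to the deadlock state) yields a single-input, output-complete, observable FSM whose transitions strictly decrease the level; it is therefore acyclic, and since each pair is used at most once it has at most $n(n-1)/2$ non-deadlock states. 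That it is homing for $(s_1,s_2)$ is exactly the assertion that along every trace to the deadlock state the successor of $\{s_1,s_2\}$ has collapsed to a singleton, which the strict decrease of the level forces. Combining this with the transition count of the first paragraph yields the bound $(n(n-1)/2)\cdot|O|$.

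The count in the first paragraph is immediate once one is careful that observability forbids parallel transitions with the same input and output; the real content, and the step I expect to be the main obstacle, is the verification in the second paragraph that the greedily chosen input always exists at a pair of positive level and that the resulting FSM is genuinely a homing test case. An alternative to this construction would be to start from an arbitrary homing test case and merge any two non-deadlock states that carry the same pair label, retaining below the merged state the shorter of the two sub--test cases; one then has to check that this folding does not create a cycle, which is precisely the difficulty the level-based construction is designed to avoid.
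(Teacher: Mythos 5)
Your proof is correct and rests on the same two-step decomposition the paper uses: bound the number of non-deadlock states of a suitable homing test case by $n(n-1)/2$, then observe that a single-input, output-complete, observable FSM has exactly $|O|$ outgoing transitions at each non-deadlock state and none at the deadlock state. The only real difference is in how the state bound is justified: the paper simply cites the constructions of \cite{Petrenko2011} and \cite{STTT2014} for the figure $n(n-1)/2+1$, whereas you make it self-contained by building the test case directly on the intersection \textbf{S}/$s_1$ $\cap$ \textbf{S}/$s_2$, using the round in which the iterative-removal procedure deletes a pair as a rank function. That construction is sound --- the greedily chosen input at a deleted pair exists precisely because the pair was deletable once all lower-level pairs were gone, the strict decrease of level gives acyclicity, and routing outputs with no successor in the intersection to the deadlock state is legitimate since such an output collapses the candidate set to at most a singleton, which the paper's definition of a homing test case accepts. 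So your argument proves somewhat more than the paper writes down (it reproves the cited state bound), but it buys no new bound and follows the same essential counting; the one shared imprecision is that both arguments really bound the transitions of the constructed small test case rather than of a literally ``shortest'' one in the paper's height sense.
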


\begin{corollary}
Given a complete observable adaptively homing nondeterministic machine \textbf{S}, a shortest homing test case requires a polynomial size of memory for its storage, if the number of FSM outputs is polynomial with respect to the number of its states.
\end{corollary}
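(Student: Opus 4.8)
The plan is to deduce the claim from the per-pair bound of Proposition 3.4 together with the reduction, noted earlier for observable FSMs, of the Adaptive Homing problem for \textbf{S} to the same problem for each state pair. Proposition 3.4 already guarantees that a shortest homing test case for a \emph{pair} of states has at most $n(n-1)/2+1$ states and at most $(n(n-1)/2)\cdot|O|$ transitions, so under the hypothesis ``$|O|$ polynomial in $n=|S|$'' a single pair-homing test case occupies only polynomial memory. What remains is to lift this from a pair to the whole state set $S$.

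First I would describe how a homing test case for \textbf{S} is assembled from pair-homing ones. Keep a current candidate set $T\subseteq S$, with $T=S$ initially, and proceed in phases: while $|T|\ge 2$, pick two states $s,s'\in T$ (they form an adaptively homing pair since \textbf{S} is, by the reduction to state pairs, hence by Corollary 3.2 the intersection $\mathbf{S}/s\cap\mathbf{S}/s'$ has no complete submachine), append a shortest homing test case for $(s,s')$, and along each of its traces $\gamma$ from the initial state to a deadlock state replace $T$ by the $\gamma$-successor of $T$. Next I would bound the number of phases: since \textbf{S} is observable, every trace maps a set of states to a set of no larger cardinality, while the appended test case makes the $\gamma$-successors of $s$ and $s'$ coincide (or one of them not exist); hence $|T|$ strictly decreases at each phase, at most $n-1$ phases occur, and the height of the composite test case for \textbf{S} is at most $(n-1)\cdot n(n-1)/2$, a polynomial in $n$.

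The step I expect to be the main obstacle is the memory bound for the composite object, because a fresh pair-homing test case is attached at \emph{every} trace of the previous phase ending in a deadlock state, so one must control how many distinct candidate sets $T$ can arise along the experiment. The way I would try to close this is to identify composite states that carry the same set $T$ and the same position inside the current pair-homing test case, and then argue that only polynomially many such configurations appear --- equivalently, that the experiment can be stored implicitly by the description of \textbf{S}, the current set $T$ of at most $n$ states, and one polynomial-size pair-homing test case at a time, which is polynomial in $n$ whenever $|O|$ is. Granting this configuration count, a shortest homing test case for \textbf{S} is bounded by $n-1$ copies of a polynomial-size pair-homing test case and is therefore storable in polynomial memory; making the configuration-counting estimate rigorous --- rather than merely adding the $n-1$ polynomial bounds --- is the delicate point of the whole argument.
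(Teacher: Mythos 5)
The corollary you are proving is, despite its wording, the \emph{pair} version of the memory bound: the paper's Conclusion states explicitly that what has been shown is that ``for a pair of states of an adaptively homing FSM a shortest homing test case requires a polynomial size of memory'', while the corresponding claim for the set of all states is only ``presumed'' and its proof is ``left for the future work''. Read this way, the corollary is an immediate consequence of Proposition 3.4 and the discussion preceding it: a shortest homing test case for a pair of states has at most $n(n-1)/2+1$ states and at most $(n(n-1)/2)\cdot|O|$ transitions, which is polynomial in $n$ whenever $|O|$ is. Your opening paragraph already contains this complete argument; everything after it is an attempt to prove the stronger, whole-machine statement that the paper deliberately does not claim.

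That stronger statement is exactly where your argument has a genuine gap, and you correctly point at it yourself. The phase construction does bound the \emph{height} of the composite experiment by $(n-1)\cdot n(n-1)/2$, but a test case is a branching object: after each phase, distinct output responses lead to distinct candidate sets $T$, each of which receives its own copy of a pair-homing test case in the next phase, so the number of leaves can grow like $|O|$ raised to the height. Merging configurations that share the same $T$ and the same internal position yields a polynomial bound only if the number of distinct sets $T$ arising during the experiment is polynomial, and nothing in your proposal (or in the paper) establishes this; a priori there are $2^n$ subsets of $S$ and the construction gives no control over how many of them occur. So the ``configuration-counting estimate'' you defer is not a technicality to be tidied up later but precisely the open problem the authors leave for future work. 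The repair is to prove the corollary the paper actually intends --- the pair version, directly from Proposition 3.4 --- and to state the whole-machine memory bound only as a conjecture.
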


In \cite{STTT2014}, the following statement is established.

\begin{proposition}
\cite{STTT2014} A complete observable FSM \textbf{S} is adaptively homing if and only if each pair of two different states is homing.
\end{proposition}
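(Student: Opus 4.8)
The plan is to prove the two directions separately; the forward (``only if'') direction is immediate, and the reverse (``if'') direction is where the construction lies.

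\textit{Forward direction.} Suppose \textbf{S} is adaptively homing and let \textbf{P} be a homing test case for the whole set $S$. I claim the \emph{same} \textbf{P} witnesses that an arbitrary pair of distinct states $s_i,s_j$ is homing. Indeed, take any trace $\gamma$ of \textbf{P} from the initial state to a deadlock state; by the definition of a homing test case there is a state $s$ of \textbf{S} such that the $\gamma$-successor of every state of $S$ is empty or equal to $s$. Restricting this to the two states $s_i$ and $s_j$ is exactly the defining property of a homing test case for $\{s_i,s_j\}$. So this direction needs no construction.

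\textit{Reverse direction, and the construction.} Assume every pair of distinct states of \textbf{S} is homing. I would prove, by induction on $k=|C|$, the stronger statement that \emph{every} subset $C\subseteq S$ possesses a homing test case; taking $C=S$ then finishes the proof. For $k\le 1$ the trivial test case consisting of a single deadlock state works, since its only trace is empty and leaves the uncertainty at $C$, which is already a singleton or empty. For $k\ge 2$, pick two distinct states $s,s'\in C$; by hypothesis there is a homing test case $\textbf{P}_{s,s'}$ for the pair $(s,s')$, which (as established above) may be taken to have at most $n(n-1)/2+1$ states. For each trace $\delta$ of $\textbf{P}_{s,s'}$ from its initial state to a deadlock state, let $C_\delta$ be the $\delta$-successor of $C$; I will argue $|C_\delta|\le k-1$, so that the induction hypothesis supplies a homing test case $\textbf{P}_\delta$ for $C_\delta$. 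Grafting a copy of each $\textbf{P}_\delta$ onto the corresponding deadlock state of $\textbf{P}_{s,s'}$ produces a test case \textbf{P}; any of its traces to a deadlock state factors as $\delta\,\delta'$ with $\delta$ reaching a deadlock of $\textbf{P}_{s,s'}$ and $\delta'$ reaching a deadlock of $\textbf{P}_\delta$, and the $\delta\delta'$-successor of $C$ equals the $\delta'$-successor of $C_\delta$, which is a singleton or empty because $\textbf{P}_\delta$ is homing for $C_\delta$. Hence \textbf{P} is homing for $C$.

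\textit{The key step and the main obstacle.} Everything hinges on the inequality $|C_\delta|\le k-1$, and here I would use observability twice: first, in an observable FSM the $\delta$-successor of any single state has cardinality at most one, so $|C_\delta|\le|C|=k$; second, since $\textbf{P}_{s,s'}$ is homing for $(s,s')$, the $\delta$-successor of $\{s,s'\}$ is a singleton or empty, i.e.\ the two states $s$ and $s'$ jointly contribute at most one element to $C_\delta$, which trims the count to $(k-2)+1=k-1$. The remaining work is bookkeeping: checking that grafting a test case onto the deadlock states of another preserves all the defining properties (single-input, output-complete, observable, acyclic, initially connected) — which it does, since a deadlock state has no outgoing transitions and each grafted block already has these properties — and that the construction terminates, which is clear because the uncertainty drops by at least one across each grafted block, so along any root-to-deadlock path at most $n-1$ blocks occur and each block is finite and acyclic. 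I expect the only genuinely delicate point to be making the ``grafting'' operation precise and verifying it is well defined; the cardinality count itself is short, and as a byproduct it yields a test case of polynomial height (on the order of $(n-1)\cdot n(n-1)/2$ transitions).
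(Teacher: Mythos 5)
The paper does not actually prove this proposition: it is quoted from \cite{STTT2014} with no argument supplied, so there is no in-paper proof to compare yours against. On its own merits, your argument is the natural one and is essentially sound. The forward direction by restriction is correct, and the cardinality count $|C_\delta|\le k-1$ — observability gives at most one $\delta$-successor per state, and the homing test case for the chosen pair forces $s$ and $s'$ to contribute at most one state jointly — is exactly the right key step, after which strong induction and concatenation finish the job. Two caveats. First, the grafting is not well defined as you state it: a test case is an acyclic FSM, not necessarily a tree, so a single deadlock state of $\textbf{P}_{s,s'}$ may be reached by several traces $\delta_1,\delta_2$ with $C_{\delta_1}\ne C_{\delta_2}$, and you cannot attach one block per trace to one state; nor can you simply take the union $\bigcup_i C_{\delta_i}$ at that state, since two traces may merge the pair $\{s,s'\}$ into two \emph{different} states and the union can again have cardinality $k$. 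The clean fix is to unfold $\textbf{P}_{s,s'}$ into a tree first, which preserves the trace set and all defining properties and is harmless for a pure existence claim. Second, your closing parenthetical that the construction yields polynomially many transitions does not follow: the \emph{height} is indeed polynomial (at most $(n-1)\cdot n(n-1)/2$), but tree unfolding can blow up the number of states and transitions exponentially, and the paper itself explicitly leaves the polynomial-size question for the whole-state-set test case as future work in its conclusion. Neither caveat affects the truth of the proposition, which asserts only existence.
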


Since for an FSM with $n$ states, the number of pairs of different states is $n (n -1)/2$, the above proposition immediately implies the following statement.

\begin{proposition}
The problem of checking of the existence of a homing test case for a complete observable FSM \textbf{S} is in P.
\end{proposition}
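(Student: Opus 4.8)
The plan is to assemble the statement from the two results already at hand: the pairwise characterisation borrowed from \cite{STTT2014} (a complete observable FSM is adaptively homing if and only if every pair of distinct states is homing), and Proposition~2 (deciding whether a single pair of states is homing is in P). Given a complete observable FSM $\textbf{S} = (S, I, O, h_S)$ with $|S| = n$, the decision procedure I would use is: enumerate the $n(n-1)/2$ unordered pairs $(s_i, s_j)$ of distinct states; for each pair run the algorithm underlying Proposition~2, i.e.\ build the intersection $\textbf{S}/s_i \cap \textbf{S}/s_j$, iteratively delete from it every state at which some input is undefined together with its incoming transitions, and test whether the initial state $(s_i, s_j)$ eventually gets removed. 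Report that $\textbf{S}$ is adaptively homing if and only if this happens for every pair; if some pair survives, return that pair as a witness that $\textbf{S}$ is not adaptively homing.

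Correctness is then immediate. By the corollary to Proposition~1, the pair $(s_i, s_j)$ is homing exactly when $\textbf{S}/s_i \cap \textbf{S}/s_j$ has no complete submachine, which is precisely the property decided by the iterative-removal step; and by the preceding proposition (the one from \cite{STTT2014}), $\textbf{S}$ is adaptively homing precisely when all such pairs are homing. Hence the procedure answers the Adaptive Homing problem for $\textbf{S}$ correctly.

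For the running time I would observe that each intersection has at most $n(n-1)/2$ states and at most $(n(n-1)/2)\cdot|I|\cdot|O|$ transitions, so --- as already established in the proof of Proposition~2 --- processing one pair takes time polynomial in $n$, $|I|$ and $|O|$. Since there are only $n(n-1)/2 = O(n^2)$ pairs to process, the whole computation stays polynomial in the size of $\textbf{S}$, and in particular polynomial in $n$ under the standing hypothesis (the same one used in Proposition~2) that the numbers of FSM inputs and outputs are polynomial in the number of states. This places the problem in P.

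There is essentially no hard core left once Propositions~1 and~2 and the \cite{STTT2014} pairwise characterisation are available; the only points that require care are keeping the dependence on $|I|$ and $|O|$ explicit so that the hypothesis matches the one of Proposition~2, and confirming that ranging over all state pairs contributes only a quadratic --- not exponential --- multiplicative factor, which it does, since the pairs are nothing more than the states of a single product machine and are not combined any further.
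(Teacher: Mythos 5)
Your proposal is correct and matches the paper's own argument: the paper likewise obtains this result by combining the pairwise characterisation from \cite{STTT2014} with Proposition~2 and observing that only $n(n-1)/2$ pairs need to be checked. You merely spell out the per-pair algorithm and the bookkeeping on $|I|$ and $|O|$ in more detail than the paper does.
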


\textbf{Example}. Consider an FSM with a flow table in Table 1. By direct inspection, one can assure that each pair of states is adaptively homing. A flow table for the intersection \textbf{S}/$s_1$ $\cap$ \textbf{S}/$s_2$ that has no complete submachine, is shown in Table 2. States 1 and 3 can be homed by the input $i_2$ while States 2 and 3 can be homed by the input $i_1$.

By direct inspection, one can assure there does not exist a complete deterministic submachine for the FSM \textbf{S}/$s_1$ $\cap$ \textbf{S}/$s_2$ with a flow table in Table 2. Therefore, the FSM \textbf{S} with a flow table in Table 1 is adaptively homing. One of homing test cases for this machine is shown in Table 3.

\begin{table}	
\caption{FSM \textbf{S}}		
\begin{tabular}{|c|c|c|c|}	
\hline	
Input/State & 1 & 2 & 3 \\ \hline	
$i_1$ & $1/o_1$, $3/o_2$ & $2/o_2$, $3/o_1$ & $2/o_2$ \\ \hline
$i_2$ & $1/o_1$, $2/o_2$ & $3/o_1, o_2$ & $1/o_1$ \\ \hline
$i_3$ & $1/o_1$, $3/o_2$ & $2/o_1$ & $1/o_2$ \\ \hline	
\end{tabular}	
\end{table}

\begin{table}	
\caption{FSM \textbf{S}/$s_1$ $\cap$ \textbf{S}/$s_2$}		
\begin{tabular}{|c|c|c|c|}	
\hline	
Input/State & $\overline{1,2}$ & $\overline{2,3}$ & $\overline{1,3}$ \\ \hline	
$i_1$ & $\overline{1,3}/o_1$, $\overline{2,3}/o_2$ &   & $\overline{2,3}/o_2$ \\ \hline
$i_2$ & $\overline{1,3}/o_1$, $\overline{2,3}/o_2$ & $\overline{1,3}/o_1$ &   \\ \hline
$i_3$ & $\overline{1,2}/o_1$ &     & $\overline{1,3}/o_2$ \\ \hline	
\end{tabular}	
\end{table}

\begin{table}	
\caption{The homing test case for the FSM \textbf{S}}		
\begin{tabular}{|c|c|c|c|c|}	
\hline	
Input/State & $\overline{1,2,3}$ & $\overline{1,3}$ & $\overline{2,3}$ & $p$ \\ \hline	
$i_1$ & $\overline{1,3}/o_1$, $\overline{2,3}/o_2$ &   &   & \\ \hline
$i_2$ &    & $p/o_1, o_2$ & $p/o_1, o_2$   &  \\ \hline
\end{tabular}	
\end{table}

\section{Conclusion}
In this paper, we have shown that the problem of checking the existence of a homing test case for a given complete observable FSM is in P. Moreover, we have shown that for a pair of states of an adaptively homing FSM a shortest homing test case requires a polynomial size of memory for its storage, if the number of FSM outputs is polynomial with respect to the number of its states.

Keeping in mind that the procedure for deriving a homing test case for the set of all states of a complete observable FSM asks in fact for concatenatian of test cases for pairs of states, we can presume that for an adaptively homing FSM, a shortest homing test case requires a polynomial size of memory for its storage when the number of FSM outputs is polynomial with respect to the number of its states. Nevertheless, we leave the proof of this statement for the future work as well as the study whether the problem of deriving a shortest homing test case is also in P. Another interesting problem can be the exact characterization of all homing test cases in the form of an FSM, called a canonical homing test case, similar to the canonical separator for distinguishing test cases in \cite{Petrenko2011}.

\nocite{*}
\bibliographystyle{eptcs}
\bibliography{generic_mine}
\end{document}